\newtheorem{theorem}{Theorem}[section]
\newtheorem{lemma}[theorem]{Lemma}
\newtheorem{prop}[theorem]{Proposition}
\theoremstyle{definition}
\theoremstyle{remark}
\newtheorem{remark}[theorem]{Remark}
\newcommand{\dontprint}[1]{\relax}
\begin{document}
\title{Gaudin subalgebras and wonderful models}
\author[L. Aguirre]{Leonardo Aguirre}
\address{Department of mathematics,
ETH Zurich, 8092 Zurich, Switzerland}
\author[G. Felder]{Giovanni Felder}
\address{Department of mathematics,
ETH Zurich, 8092 Zurich, Switzerland}
\author[A. P. Veselov]{Alexander P. Veselov}
\address{ 	
Department of Mathematical Sciences\\ 
Loughborough University \\
Loughborough 
LE11 3TU \\
UK and Moscow State University, Russia}
\begin{abstract}
  Gaudin hamiltonians form families of $r$-dimensional abelian Lie
  subalgebras of the holonomy Lie algebra of the arrangement of
  reflection hyperplanes of a Coxeter group of rank $r$. We consider
  the set of principal Gaudin subalgebras, which is the closure in the
  appropriate Grassmannian of the set of spans of Gaudin hamiltonians.
  We show that principal Gaudin subalgebras form a smooth projective
  variety isomorphic to the De Concini--Procesi compactification of
  the projectivized complement of the arrangement of reflection
  hyperplanes.
\end{abstract}
\subjclass[2010]{Primary 81R12; Secondary 20F55, 14H70, 14N20}
\date{6 September 2014}
\keywords{}
\maketitle
\section{Introduction}\label{sec-1}
The hamiltonians of Gaudin's quantum integrable systems
are commuting elements of the tensor 
power $U\mathfrak g^{\otimes n}$ of the universal enveloping algebra of
a complex Lie algebra $\mathfrak g$ with a symmetric invariant
tensor $t\in \mathrm{Sym}^2(\mathfrak g)^{\mathfrak g}$. They depend
on $n$ distinct complex parameters and have the form:
\[
H_i=H_i(z_1,\dots,z_n)
=\sum_{1\leq j\leq n,j\neq i}\frac{t_{ij}}{z_i-z_j},\quad i=1,\dots,n.
\]
Here $t_{ij}$ is
the image of the invariant tensor
 by the embedding $U(\mathfrak
g)^{\otimes2}\to U(\mathfrak g)^{\otimes n}$ placing a tensor in the
$i$th and $j$th factor and filling with $1$. In the original context
studied by Gaudin $\mathfrak g=sl_2$ and thus the $H_i$ act on
$(\mathbb C^2)^{\otimes n}$ and can be interpreted as commuting
hamiltonians of an integrable quantum spin chain. Later the
Gaudin hamiltonians reappeared as the coefficients of the
Knizhnik--Zamolodchikov flat connections $d+\kappa^{-1}\sum H_idz$
on a trivial bundle on the complement of the diagonals in $\mathbb C^n$
with fiber a tensor product of $\mathfrak g$-modules.

The fact that the $H_i$ commute or, equivalently, the flatness of the
Knizhnik--Zamolodchikov connection, follows
from the Kohno--Drinfeld relations
\begin{equation}\label{eq-KD}
\begin{array}{cc}
{}[t_{ij},t_{kl}]=0,&\text{for  distinct $i,j,k,l$},
\\ 
{}[t_{ij},t_{ik}+t_{jk}]=0,&
\text{ for distinct $i,j,k$.}
\end{array}
\end{equation}
The Kohno--Drinfeld Lie algebra $\mathfrak t_n$ is the quotient of the
free Lie algebra in generators $t_{ij}=t_{ji}, 1\leq i<j\leq n$ by the
relations \eqref{eq-KD}. It is useful to consider the Gaudin
hamiltonians as commuting elements of $\mathfrak t_n$. Indeed the
Kohno--Drinfeld Lie algebra maps to other relevant Lie algebras,
giving rise to commuting elements there. In particular the map sending
$t_{ij}$ to the permutation $(ij)$ is a homomorphism from $\mathfrak
t_n$ to the group algebra of the symmetric group $S_n$, viewed as a
Lie algebra with the commutator bracket. As the relations are
homogeneous, the Kohno--Drinfeld Lie algebra comes with a grading
$\oplus_{i\geq 1} \mathfrak t_n^i$ so that $\mathfrak t_n^{1}$ is the
span of the generators. The linear span of the $H_i(z)$ form an
$n-1$-dimensional abelian subalgebra $G_n(z)$ of $\mathfrak t_n$ contained in
$\mathfrak t_n^1$, parametrized by the variety $\mathcal M_{0,n+1}$ of
$n$-tuples of distinct complex numbers up to affine
transformations. In our paper \cite{AFV} we studied the closure
$\mathcal G_n\subset \mathrm{Gr}(n-1,\mathfrak t_n^1)$ of this set of
abelian subalgebras in the Grassmannian of $n-1$-planes in $\mathfrak
t_n^1$. This closure contains in particular the span of the elements
$t_{12},t_{13}+t_{23},\dots, t_{1n}+\cdots+t_{n-1,n}$, mapping to the
Jucys--Murphy elements arising in the representation theory of the
symmetric group.

In \cite{AFV} we proved:
\begin{enumerate} 
\item[(i)] $\mathcal G_n$ is a nonsingular projective variety isomorphic to
  the Deligne--Mumford moduli space $\bar {\mathcal M}_{0,n+1}$ of stable
  rational curves with $n+1$ marked points, so that $G_n(z)$ corresponds
to $(\mathbb P^1,z_1,\dots,z_n,\infty)$.
\item[(ii)] $\mathcal G_n$ consists of all abelian subalgebras of $\mathfrak
  t_n$ of maximal dimension contained in the span $\mathfrak t_n^1$ of
  generators.
\end{enumerate}

The Kohno--Drinfeld Lie algebra is the special case of the $A_{n-1}$
system of a family of Lie algebras defined for general Coxeter systems
(and in fact for general central arrangements of hyperplanes). The aim
of this paper is to discuss the extension of our results to this more
general situation.  It turns out that (i) extends to general Coxeter
systems but (ii) does not hold in general.

Let $(H_\alpha)_{\alpha\in\Delta}$ be the arrangement of complexified
reflection hyperplanes $H_\alpha\in V$ of a finite Coxeter group,
defined by a set $\Delta\in V^*$ of linear forms. Assume that $\Delta$
span $V^*$ and let $r=\mathrm{dim}(V)$. Kohno's {\em holonomy Lie
  algebra} $\mathfrak t_\Delta$ \cite{Kohno} associated with the
arrangement is generated by $t_\alpha$, $\alpha\in\Delta$ with a
defining relation 
\begin{equation}\label{t_Delta}
[t_\alpha,\sum_{\beta\in W\cap\Delta} t_\beta]=0.
\end{equation}
for each pair $(\alpha,W)$ consisting of a linear form
$\alpha\in\Delta$ and a two-dimensional subspace $W\subset V^*$
containing $\alpha$.  In \cite{Kohno} this Lie algebra appears as the
Lie algebra of the unipotent completion of the fundamental group of
the complement of the union of the hyperplanes of the arrangement. Its
enveloping algebra is the quadratic dual of the Orlik--Solomon algebra
of the arrangement in the cases where the latter is quadratic, see
\cite{Yuzvinsky}. These relations insure the commutativity of the
Gaudin hamiltonians
\[
H(w)=\sum_{\alpha\in\Delta}\frac{\alpha(w)}{\alpha(z)}\, t_\alpha,\quad w\in V,
\]
for any fixed $z$ in the complement of the arrangement. We thus obtain
a family of abelian subalgebras $G_\Delta(z)=\mathrm{span}\{H(w),w\in
V\}$ of $\mathfrak t_\Delta$ contained in the span of the generators
and parametrized by the projectivization $\mathcal M_\Delta=\mathbb
P(V\smallsetminus\cup_{\alpha\in\Delta}H_\alpha)$ of the hyperplane
complement.
 
The relations \eqref{t_Delta} and the corresponding abelian subalgebras
$G_{\Delta}(z)$ arise in several different contexts:
\begin{enumerate}
\item Let $\Delta$ be a set of positive roots of a simple Lie algebra
  $\mathfrak g$ with root generators $e_\alpha,e_{-\alpha}$ normalized
  so that $\langle e_\alpha,e_{-\alpha}\rangle=1$ for an invariant
  bilinear form $\langle\ ,\ \rangle$.  Then $t_\alpha=e_\alpha
  e_{-\alpha}+e_{-\alpha} e_\alpha\in U(\mathfrak g)$ obey the
  relations \eqref{t_Delta} for any $g_\alpha\in U\mathfrak h$, the
  universal enveloping algebra of the Cartan subalgebra of $\mathfrak
  g$. The relations \eqref{t_Delta} (in the equivalent formulation of
  the commutativity of the generic subalgebras) are due to Vinberg,
  see \cite{Vinberg}, Theorem 2. The closure of the set of generic
  abelian subalgebras in the Grassmannian was described in the same
  paper \cite{Vinberg} as a set in terms of certain
  equivalence classes of paths in the Cartan subalgebra.  
\item The reflections $s_\alpha$ of a finite Coxeter group $G$ obey the
  relations \eqref{t_Delta}.  Indeed for any subspace $W\subset V^*$,
  the reflections $s_{\alpha}$ with $\alpha\in \Delta\cap W$ generate
  a subgroup which is also a Coxeter group and \eqref{t_Delta} is the
  statement that the sum of reflections of a Coxeter group is central
  in the group algebra, which is a consequence of the fact that the
  reflections form a conjugacy class. More generally we may take
  $t_\alpha=k_\alpha s_\alpha$ for a $G$-invariant function $\alpha\to
  k_\alpha$. The corresponding KZ connection $d+\sum k_\alpha
  s_\alpha d\log\alpha$ on the trivial bundle on the hyperplane
  complement with fibre $\mathbb C G$ was introduced by Cherednik
  \cite{Cherednik}.
\item The Kohno--Drinfeld Lie algebra is the special case where
  $\Delta$ is the root system $A_{n-1}$. Other cases of $\mathfrak t_\Delta$
  admitting a map to $U\mathfrak g^{\otimes n}$ for a semisimple Lie
  algebra $\mathfrak g$, include $\Delta=B_n$ for any $\mathfrak g$
  and $\Delta=D_n$ for $\mathfrak g=\mathfrak{sl}_N$, see
  \cite{Leibman}.
\end{enumerate}

Define a {\em Gaudin subalgebra} of $\mathfrak t_\Delta$ to be an
$r$-dimensional abelian Lie subalgebra contained in the span
$\mathfrak t^1_\Delta$ of generators. The spans $G(z)$ of Gaudin
hamiltonians are Gaudin subalgebras and so are their limits as
$z$ approaches the hyperplanes. We call these {\em principal
Gaudin subalgebras}, following Vinberg. 
Thus the set of principal Gaudin subalgebras
is the closure of the family $(G(z))_{z\in \mathcal M_\Delta}$ in the
Grassmannian of $r$-planes in $\mathfrak t^1_\Delta$.  
 
The main result of this paper is:

\begin{theorem}\label{t-1}
  Let $(H_\alpha)_{\alpha\in \Delta}$ be the arrangement of reflection
  hyperplanes of a Coxeter system of rank $r$.  The set of principal
  Gaudin subalgebras is a smooth subvariety of the Grassmannian
  $\mathrm{Gr}(r,\mathfrak t^1_\Delta)$ and is isomorphic to the De
  Concini--Procesi compactification $\bar {\mathcal M}_\Delta$
  \cite{dCP} of $\mathbb
  P(V\smallsetminus\cup_{\alpha\in\Delta}H_\alpha)$.
\end{theorem}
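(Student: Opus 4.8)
The plan is to realize the assignment $z\mapsto G_\Delta(z)$ as a rational map to the Grassmannian, to extend it to a morphism on $\bar{\mathcal M}_\Delta$, and to show the extension is a closed immersion with image the set of principal Gaudin subalgebras. First I would identify $\mathfrak t^1_\Delta$ with $\mathbb C^\Delta$ via the basis $(t_\alpha)$ (the defining relations \eqref{t_Delta} being homogeneous of degree $2$, the generators are linearly independent, so $\dim\mathfrak t^1_\Delta=|\Delta|$). Writing $L=\{(\alpha(w))_{\alpha\in\Delta}:w\in V\}\subset\mathbb C^\Delta$ for the fixed $r$-plane cut out by the arrangement, the Gaudin subalgebra is the torus translate $G_\Delta(z)=D_z^{-1}\cdot L$, where $D_z=\mathrm{diag}(\alpha(z))_{\alpha\in\Delta}$ acts through the diagonal torus on $\mathrm{Gr}(r,\mathbb C^\Delta)$. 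In Plücker coordinates this gives, for each $r$-subset $I\subset\Delta$,
\[
p_I(z)=\frac{\det\big(\alpha(w_k)\big)_{\alpha\in I,\,1\le k\le r}}{\prod_{\alpha\in I}\alpha(z)}\ \sim\ c_I\prod_{\alpha\in\Delta\smallsetminus I}\alpha(z),
\]
the second expression obtained after clearing denominators by $\prod_{\alpha\in\Delta}\alpha(z)$; here $c_I\ne 0$ exactly when $I$ is a basis of $V^*$. Thus, composed with the Plücker embedding, $z\mapsto G_\Delta(z)$ is the rational map $\mathbb P(V)\dashrightarrow\mathbb P(\wedge^r\mathfrak t^1_\Delta)$ whose components are the products $\prod_{\alpha\notin I}\alpha$.

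Next I would extend this to a morphism on the wonderful model. Working in the standard affine charts of $\bar{\mathcal M}_\Delta$ indexed by maximal nested sets $\mathcal S$ of the building set, with local coordinates $(u_X)_{X\in\mathcal S}$, each form factors as a monomial times a unit, $\alpha=u^{m(\alpha)}v_\alpha$ with $v_\alpha$ invertible. Hence $p_I=c_I\,u^{\,\sum_{\alpha\notin I}m(\alpha)}\prod_{\alpha\notin I}v_\alpha$, and dividing all the $p_I$ by the common monomial $u^{\mu}$, with $\mu$ the coordinatewise minimum of the exponent vectors over the bases $I$, leaves regular functions with no common zero on the chart. The key combinatorial point, which must be extracted from the nested-set structure, is that this minimum is itself attained by some basis, so that the quotient map is everywhere defined; this yields a morphism $\bar\phi\colon\bar{\mathcal M}_\Delta\to\mathrm{Gr}(r,\mathfrak t^1_\Delta)$ restricting to $\phi$ on the open part.

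Finally, since $\bar{\mathcal M}_\Delta$ is smooth and proper and the Grassmannian is separated, the image $\bar\phi(\bar{\mathcal M}_\Delta)$ is closed and coincides with the closure of $\phi(\mathcal M_\Delta)$, that is, with the set of principal Gaudin subalgebras. It then suffices to show that $\bar\phi$ is injective and unramified, whence it is a closed immersion and the image is isomorphic to $\bar{\mathcal M}_\Delta$. Since the holonomy Lie algebra, the Gaudin subalgebras and the wonderful model are all defined for arbitrary central arrangements, I would prove this by induction on the rank within that larger class, which is closed under the operations appearing on the boundary. On the open part injectivity is clear from the residues $\alpha(w)/\alpha(z)$. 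A boundary divisor of $\bar{\mathcal M}_\Delta$ attached to an irreducible flat $X$ is a product $\bar{\mathcal M}_{\Delta_X}\times\bar{\mathcal M}_{\Delta^X}$ of smaller wonderful models, where $\Delta_X=\{\alpha:X\subset H_\alpha\}$ is the localization and $\Delta^X$ the restriction to $X$. Degenerating $z=x_0+\varepsilon y$ toward $X$, the leading ($\varepsilon^{-1}$) part of $H(w)$ is the Gaudin hamiltonian of $\Delta_X$ at $y$, while the hamiltonians with $w\in X$ have finite limit the Gaudin hamiltonian of $\Delta^X$ at $x_0$; the limiting $r$-plane thus factors as $G_{\Delta_X}(y)\oplus G_{\Delta^X}(x_0)$, and the inductive hypothesis recovers both factors, hence the boundary point. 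Computing the differential in the same charts yields unramifiedness.

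The step I expect to be the main obstacle is the boundary analysis in the last paragraph: proving that the limiting subalgebra determines the full boundary point of $\bar{\mathcal M}_\Delta$, i.e. injectivity and injectivity of the differential along the exceptional divisors. This requires a precise matching between the nested-set stratification of the wonderful model and the iterated degeneration of the Gaudin hamiltonians, together with the combinatorial lemma on the attainment of the minimal exponent used to extend the map in the second step.
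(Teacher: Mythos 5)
Your first two steps are a sound, if differently phrased, version of what the paper does in Section \ref{sec-3}: the Pl\"ucker coordinates $p_I\sim c_I\prod_{\alpha\notin I}\alpha$ and the monomial factorization of the linear forms on the nested-set charts are equivalent to the explicit formula for $i_S$ in the chart $V_B$ of the Grassmannian, and the ``key combinatorial point'' you flag (that the coordinatewise minimum of the exponent vectors is attained by a single basis) is precisely the existence of adapted bases for maximal nested sets: an adapted basis $B$ satisfies $|B\cap A|=\dim\langle A\rangle$ for every $A\in S$, so it simultaneously maximizes $\sum_{\alpha\in I}m(\alpha)$ in every coordinate. So the extension $\bar i$ can be completed along your lines, and indeed it holds for arbitrary central arrangements.

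The genuine gap is in your third step. You propose to prove that $\bar i$ is a closed immersion by induction on the rank ``within the larger class'' of all central arrangements, because that class is closed under the localization/restriction operations appearing on the boundary. But the statement you would be inducting on is \emph{false} in that class: the paper's own counterexample ($\Delta=\{\alpha_1,\alpha_2,\alpha_3,\alpha_1+\alpha_2,\alpha_1+\alpha_3\}$ in rank $3$) gives a $\bar i$ that contracts a curve, hence is not injective, so the inductive hypothesis fails already at small rank. Restricting the induction to Coxeter arrangements does not repair this, since the restricted arrangement $\Delta^X$ on a flat of a reflection arrangement is in general not a reflection arrangement, so the boundary factors leave the class. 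What is actually needed --- and what your boundary analysis has no substitute for --- is an input special to reflection arrangements: the paper uses Lemma \ref{l-hr} to produce, for each irreducible flat $A$ in a nested set with adapted basis of simple roots, a root $\theta_A\in A$ supported on \emph{all} of $B\cap A$; the coordinates $c_{\theta_A,\beta}$ of the limiting plane then determine $z_A\in\mathbb P(V/A^\perp)$ outright, giving an explicit regular left inverse $\varphi_S$ of $\bar i$ on each chart $U_S$ and hence the closed embedding directly, with no induction and no need to identify the limiting plane with a direct sum of smaller Gaudin subalgebras. In the counterexample this is exactly what breaks: for the nested set containing $\{\alpha_1\}$ there is no vector of $\Delta$ with full support in the adapted basis, and injectivity fails. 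Your plan as stated cannot close this gap.
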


The paper is organized as follows: in Section \ref{sec-2} we recall
the definition and description of the De Concini--Procesi
compactification $\bar{\mathcal M}_\Delta$ of the complement of an
arrangement of hyperplanes. In Section \ref{sec-3} we give an
embedding of $\bar M_\Delta$ in the Grassmannian
$\mathrm{Gr}(r,\mathfrak t^1_\Delta)$. These constructions apply to
any central arrangement. We then specialize to the case of Coxeter
arrangements and formulate Theorem \ref{t-2} in Section \ref{sec-4},
which is a more precise version of our result and implies Theorem
\ref{t-1}.  

We show that our statement does not hold for general arrangements
by giving a simple counterexample, taken from
\cite{Leo}, where the case of general arrangement is considered.

A real version of the theorem also holds and leads to the moduli space tessellated by convex polyhedra known as De Concini-Procesi associahedra \cite{dCP}, which are generalisations of Stasheff polytopes determined by the corresponding Coxeter graphs \cite{CD, TL}. 

We conclude with the discussion of non-principal Gaudin subalgebras, using $B_n$ case as an example.

\section{De Concini--Procesi compactification}\label{sec-2}
Let $(H_\alpha)_{\alpha\in\Delta}$ be a central arrangement of
hyperplanes of rank $r$, i.e., a finite family of hyperplanes through
the origin in an $r$-dimensional complex vector space $V$.  It will be
convenient to label the family by a set $\Delta\subset V^*$ of
non-zero linear forms vanishing on the hyperplanes and work with the
set of vectors $\Delta$ rather than with the hyperplanes.  We 
assume for convenience that $\Delta$ is irreducible, in the sense that
$V$ cannot be written as a non-trivial direct sum of two subspaces
whose union contains $\Delta$.  The projectivized complement
${\mathcal M}_\Delta=\mathbb
P(V\smallsetminus\cup_{\alpha\in\Delta}H_\alpha)$ admits a smooth
compactification $\bar {\mathcal M}_\Delta$, the ``wonderful model''
of De Concini and Procesi.

\subsection{The partially ordered set of flats}
We denote by $\langle A\rangle$ the linear span of a subset $A$ of $V^*$ and
call a non-empty subset $A\subset \Delta$ a {\em flat} if it contains all
elements of $\Delta$ in its span, i.e., if $\langle A\rangle\cap
\Delta=A$.
A flat $A$ is called reducible if there are flats $A_1,A_2$ such
that $A_1\cup A_2=A$ and $\langle A\rangle= \langle
A_1\rangle\oplus\langle A_2\rangle$, otherwise irreducible.
Irreducible flats form a partially ordered set by inclusion with
maximal element $\Delta$ and minimal elements the one-point sets
$\{\alpha\}$, $\alpha\in \Delta$. 

\subsection{The De Concini--Procesi compactification}
If $A$ is a subset of $V^*$ we denote by $A^\perp=\cap_{\alpha\in
  A}H_\alpha$ the orthogonal complement. The natural projection $V\to
V/A^\perp$ restricts to a map $V\smallsetminus A^\perp\to
V/A^\perp\smallsetminus\{0\}$ and induces a projection
$V\smallsetminus A^\perp\to\mathbb P( V/A^\perp)$.  Thus we have a map
\[
j_\Delta\colon
{\mathcal M}_\Delta\hookrightarrow \prod_{A}\mathbb P(V/A^\perp)
\]
with product taken over all irreducible flats. The map is injective since
$A=\Delta$ is irreducible and thus $\mathbb P(V/\Delta^\perp)=\mathbb
P(V)$ belongs to the product.

The De Concini--Procesi compactification $\bar {\mathcal M}_\Delta$ is
by definition the closure of the image of $j_\Delta$. It is a smooth
projective variety containing $j_\Delta({\mathcal M}_\Delta)\cong
{\mathcal M}_\Delta$ as a Zariski open subset whose complement is the
union of smooth divisors with normal crossings \cite{dCP}. 
\subsection{Nested sets}
A {\em nested set} for $\Delta$ is a set $S$ of irreducible flats of
$\Delta$ so that for any subset $\{A_1,\dots,A_k\}$ of $S$ consisting
of pairwise non-comparable flats, we have $ \langle
\cup_{i=1}^kA_i\rangle=\langle A_1\rangle\oplus\cdots\oplus\langle
A_k\rangle$. Nested sets are partially ordered by inclusion.
\begin{lemma}\label{l-AS} (\cite{dCP}, p.~500) Let $S$ be a nested set
  containing $\Delta$ and $\alpha\in \Delta$. Then the subsets of $A$
  containing $\alpha$ are linearly ordered.  Thus there is a
  unique minimal flat $A_S(\alpha)\in S$ such that $\alpha\in A$.
\end{lemma}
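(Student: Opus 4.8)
The plan is to reduce everything to a single observation—that two incomparable flats of a nested set cannot share a common nonzero linear form—and then to extract the minimal flat from finiteness. Concretely, the statement to establish is that the flats $A\in S$ with $\alpha\in A$ are totally ordered by inclusion, after which the existence and uniqueness of $A_S(\alpha)$ is automatic.

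First I would argue by contradiction. Suppose $A_1,A_2\in S$ both contain $\alpha$ but are non-comparable, i.e.\ neither is contained in the other. Applying the defining property of a nested set to the pair $\{A_1,A_2\}$ of pairwise non-comparable flats gives $\langle A_1\cup A_2\rangle=\langle A_1\rangle\oplus\langle A_2\rangle$, and in particular $\langle A_1\rangle\cap\langle A_2\rangle=\{0\}$. On the other hand $\alpha\in A_1\subseteq\langle A_1\rangle$ and $\alpha\in A_2\subseteq\langle A_2\rangle$, so $\alpha\in\langle A_1\rangle\cap\langle A_2\rangle=\{0\}$, forcing $\alpha=0$. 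This contradicts $\alpha\in\Delta\subseteq V^*\smallsetminus\{0\}$. Hence any two flats of $S$ containing $\alpha$ are comparable, and the collection of such flats is linearly ordered by inclusion.

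It then remains only to produce $A_S(\alpha)$. This collection is nonempty, since by hypothesis $\Delta\in S$ and $\alpha\in\Delta$; and because $\Delta$ is finite there are finitely many subsets of $\Delta$, so the chain is finite. A finite nonempty totally ordered set has a unique minimal element, which we take as $A_S(\alpha)$. I expect no genuine obstacle here: the whole content of the lemma is the remark that the direct-sum condition built into the definition of a nested set is exactly what prevents two incomparable members from simultaneously spanning the nonzero form $\alpha$, and the rest is the elementary fact that a finite chain has a least element.
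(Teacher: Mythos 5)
Your argument is correct and complete. The paper itself gives no proof of this lemma, simply citing De Concini--Procesi, but your reasoning is exactly the intended one: two non-comparable members of a nested set have spans meeting only in $\{0\}$ by the direct-sum condition in the definition, so they cannot both contain the nonzero form $\alpha$; this forces the flats of $S$ containing $\alpha$ to form a chain, which is nonempty (it contains $\Delta$) and finite, hence has a unique least element.
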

We are particularly interested in {\em maximal nested sets}. They have
the property (\cite{dCP}, Proposition 1.1) that for every $A\in S$ the
sets $A_1,\dots, A_k\in S$ that are maximal proper subsets of $A$ obey
$\sum\mathrm{dim}(A_i)=\mathrm{dim}(A)-1$. Thus one can pick a vector
$\alpha_A\in A\smallsetminus \cup_iA_i$ and obtain a basis of $V$
labeled by $S$. Such bases are called {\em adapted bases} for the
maximal nested set $S$. They have the property that their intersection
with any $A\in S$ is a basis of $\langle A\rangle$.
\subsection{An open cover of $\bar {\mathcal M}_\Delta$}
The De Concini--Procesi variety $\bar {\mathcal M}_\Delta$ admits a
cover by open affine subsets $U_S$ labeled by maximal nested sets $S$.
For every maximal nested set $S$ the open subset $U_S$ is defined as
the set of $z\in \bar {\mathcal M}_\Delta$ such that for every $A\in
S$ the projection $z_A\in \mathbb P(V/A^\perp)$ obeys
\[
\forall \alpha\in A\smallsetminus \cup_{i=1}^kA_i:\alpha(z_A)\neq 0, 
\]
where $A_1,\dots,A_k\in S$ are the maximal elements properly contained
in $A$.

\begin{theorem}\label{t-dCP}(De Concini--Procesi, \cite{dCP} Sec.~1.1)
\
\begin{enumerate}
\item[(i)] The sets $U_S$, where $S$ runs over the maximal nested sets
  for $\Delta$, form an open covering of $\bar {\mathcal M}_\Delta$.
\item[(ii)] The natural projection $U_S\to \prod_{A\in S}\mathbb
  P(V/A^\perp)$ is a closed embedding.
\end{enumerate}
\end{theorem}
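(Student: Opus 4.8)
The plan is to establish both parts by reducing them to an explicit description of each chart $U_S$ in monomial coordinates adapted to $S$, with the global structure organized by the realization of $\bar{\mathcal M}_\Delta$ as an iterated blow-up. Concretely, one builds $\bar{\mathcal M}_\Delta$ from $\mathbb P(V)$ by successively blowing up the proper transforms of the subspaces $\mathbb P(A^\perp)$, $A$ an irreducible flat, taken in order of increasing dimension; the exceptional divisors assemble into the boundary $\cup_A D_A$, and the closure of $j_\Delta(\mathcal M_\Delta)$ is recovered as this blow-up. I would prove (i) and (ii) together by induction on $|\Delta|$, each inductive step being a single blow-up of the deepest stratum.

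For part (ii), fix a maximal nested set $S$ and an adapted basis $(\alpha_A)_{A\in S}$, so that $\{\alpha_A\}_{A\in S}$ is a basis of $V^*$ meeting each $\langle B\rangle$, $B\in S$, in a basis of $\langle B\rangle$. Using Lemma \ref{l-AS}, which furnishes for each $\alpha\in\Delta$ the minimal flat $A_S(\alpha)\in S$ containing it together with the chain of flats of $S$ above it, I would introduce coordinates $u_A$ ($A\in S$, with $u_\Delta$ set to $1$ to absorb the projective scaling on $\mathbb P(V/\Delta^\perp)=\mathbb P(V)$) and write every generator as a monomial $\alpha=c_\alpha\prod_{A\in S,\,A\supseteq A_S(\alpha)}u_A$ with $c_\alpha$ a unit. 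The adapted-basis property makes this substitution invertible: the coordinate $u_A$ of a non-maximal flat is recovered, on the non-vanishing locus defining $U_S$, as the ratio $\alpha_A(z_B)/\alpha_B(z_B)$ read off from the component $z_B$ of its parent $B\in S$, hence as a regular function of $(z_A)_{A\in S}$. Conversely, for an arbitrary irreducible flat $B$ every $\alpha\in\langle B\rangle$ is a monomial in the $u_A$, so after factoring out the common monomial the component $z_B$ is itself a regular function of $(z_A)_{A\in S}$. Thus $\pi_S$ is injective with a regular inverse and its image is the closed subvariety of $\prod_{A\in S}\mathbb P(V/A^\perp)$ cut out by these monomial relations; this proves (ii) and identifies $U_S$ with $\mathbb A^{r-1}$.

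For part (i), given $z\in\bar{\mathcal M}_\Delta$ I would read off the set $S(z)$ of irreducible flats $A$ with $z\in D_A$ and show that $S(z)$ is \emph{nested}: if $A_1,\dots,A_k\in S(z)$ are pairwise non-comparable, the compatibility equations defining the closure force $\langle\cup_iA_i\rangle=\bigoplus_i\langle A_i\rangle$, which is precisely the defining condition. Extending $S(z)$ to a maximal nested set $S$ and reading the monomial coordinates above, one checks that the non-vanishing conditions defining $U_S$ hold at $z$, so $z\in U_S$; since each $U_S$ is manifestly open, this gives the covering. The main obstacle is exactly this step: proving that $S(z)$ is nested and that the monomial charts genuinely cover $\bar{\mathcal M}_\Delta$ and land inside it. I expect to clear it through the blow-up induction, the crucial point being that the direct-sum condition in the definition of a nested set is what guarantees transversality of the successive centers, so that the boundary stays a normal-crossing divisor whose strata are indexed exactly by nested sets; this is what lets the charts $U_S$ match across the induction and close the argument.
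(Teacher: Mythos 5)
The paper offers no proof of this statement: it is quoted as background from \cite{dCP}, Sec.~1.1, so there is nothing internal to compare against. Your outline follows the general strategy of that source --- realize $\bar{\mathcal M}_\Delta$ as an iterated blow-up of $\mathbb P(V)$, introduce monomial coordinates attached to a maximal nested set and an adapted basis, and index boundary strata by nested sets --- which is the right route in principle.

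As written, however, the argument has genuine gaps. First, the identification of the closure of $j_\Delta(\mathcal M_\Delta)$ with the iterated blow-up along the proper transforms of the $\mathbb P(A^\perp)$ is itself the main structural theorem of \cite{dCP}; you take it as the starting point rather than proving it, and essentially all the content of (i) and (ii) is concentrated there. Second, for (i) the decisive claims --- that the set $S(z)$ of flats whose divisors contain a given boundary point is nested, and that the resulting non-vanishing loci cover --- are precisely what you label ``the main obstacle'' and defer to an induction you do not carry out, so (i) is not established. Third, the conclusion of your part (ii) is internally inconsistent: if the image of $U_S$ were closed in the projective variety $\prod_{A\in S}\mathbb P(V/A^\perp)$ and $U_S\cong\mathbb A^{r-1}$, then $\mathbb A^{r-1}$ would be projective. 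In fact $U_S$, as defined by the non-vanishing conditions, is not affine space in general: already for the $A_2$ arrangement one computes $U_S\cong\mathbb C^*$ inside $\bar{\mathcal M}_\Delta=\mathbb P^1$, and the projection is an isomorphism only onto a closed subvariety of the open subset of the target cut out by the same non-vanishing conditions (i.e.\ a locally closed embedding, which is how the statement is used later in the paper). The genuine affine charts $\cong\mathbb A^{r-1}$ of \cite{dCP} are attached to a maximal nested set \emph{together with a marking} and are different, larger open sets than $U_S$; conflating the two is where your monomial computation goes astray. To complete the proof you would need to actually run the blow-up induction of \cite{dCP}, or reproduce their direct verification that the $U_S$ cover and that the projection onto the factors indexed by $S$ separates points of $U_S$.
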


\section{A map to the Grassmannian}\label{sec-3}
Let $\Delta$ be a set of non-collinear vectors in an $r$-dimensional
complex vector space $V$. Kohno's holonomy Lie algebra $\mathfrak t_\Delta$ 
is the graded Lie algebra with generators $t_\alpha$ of degree 1 labeled
by $\alpha\in\Delta$ and relations (\ref{t_Delta}).
Let $\mathfrak t^1_\Delta$ be span of generators.
For our present purpose it is just a vector space with basis
$(t_\alpha)_{\alpha\in\Delta}$. We construct a map from $\bar
{\mathcal M}_{\Delta}$ to the Grassmannian $\mathrm{Gr}(r,\mathfrak
t^1_\Delta)$ of $r$-dimensional subspaces of $\mathfrak t^1_\Delta$
restricting to a locally closed embedding of ${\mathcal M}_\Delta$.
The uncompactified ${\mathcal M}_\Delta$ is embedded by sending a
point in the hyperplane complement to the span of the corresponding
Gaudin hamiltonians:
\begin{align*}
  i\colon {\mathcal M}_\Delta&= \mathbb P(V\smallsetminus
  \cup_{\alpha\in\Delta} H_\alpha)
  \to \mathrm{Gr}(r,\mathfrak t^1_\Delta)\\
  z&\mapsto \left\{
    \sum_{\alpha\in\Delta}\frac{\alpha(w)}{\alpha(z)}t_\alpha, w\in V
  \right\}.
\end{align*}
\begin{prop} \label{p-0} The embedding $i\colon {\mathcal M}_\Delta\to
  \mathrm{Gr}(r,\mathfrak t^1_\Delta)$ extends uniquely to a map
 $\bar i\colon
  \bar {\mathcal M}_\Delta\to \mathrm{Gr}(r,\mathfrak t^1_\Delta)$.
\end{prop}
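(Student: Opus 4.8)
The plan is to treat existence and uniqueness separately. Uniqueness is immediate: $\bar{\mathcal M}_\Delta$ is irreducible (it is the closure of the image of the irreducible variety $\mathcal M_\Delta$) and contains $i(\mathcal M_\Delta)\cong\mathcal M_\Delta$ as a dense open subset, while the Grassmannian, being projective, is separated; hence any two morphisms agreeing on $\mathcal M_\Delta$ coincide. So the whole content is the extension of $i$ to a morphism, which I would build chartwise on the affine cover $(U_S)$ of Theorem \ref{t-dCP} and then glue, the pieces agreeing on overlaps by the same density argument.

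For existence I would pass through the Pl\"ucker embedding $\mathrm{Gr}(r,\mathfrak t^1_\Delta)\hookrightarrow\mathbb P(\wedge^r\mathfrak t^1_\Delta)$. The subspace $i(z)$ is spanned by the vectors $H(w_j)=\sum_{\alpha\in\Delta}\tfrac{\alpha(w_j)}{\alpha(z)}t_\alpha$ for a fixed basis $w_1,\dots,w_r$ of $V$, so its Pl\"ucker coordinate indexed by an $r$-element subset $I\subset\Delta$ is
\[
p_I(z)=\frac{c_I}{\prod_{\alpha\in I}\alpha(z)},\qquad c_I=\det\big(\alpha(w_j)\big)_{\alpha\in I,\,1\le j\le r}.
\]
Here $c_I\neq0$ exactly when $I$ is a basis of $V^*$, and in that case $|I\cap A|\le r_A:=\dim\langle A\rangle$ for every flat $A$, since the forms of $I\cap A$ lie in $\langle A\rangle$ and are independent. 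On a chart $U_S$ I would invoke the De Concini--Procesi local coordinates of \cite{dCP}: there is one boundary coordinate $u_A$ for each $A\in S$ other than the maximal flat $\Delta$ (set $u_\Delta=1$), with $\{u_A=0\}$ the corresponding divisor, in terms of which each $\alpha\in\Delta$ factors as
\[
\alpha=\Big(\prod_{A\in S,\ \alpha\in A}u_A\Big)\cdot e_\alpha,
\]
with $e_\alpha$ invertible on $U_S$ (the defining inequalities of $U_S$ are precisely the nonvanishing of these leading factors). Since $A\in S$ contains $\alpha$ iff $A\supseteq A_S(\alpha)$ by Lemma \ref{l-AS}, this gives $\prod_{\alpha\in I}\alpha=\big(\prod_{A\in S}u_A^{|I\cap A|}\big)\cdot(\text{unit})$, so after multiplying all Pl\"ucker coordinates by the single monomial $\mu_S=\prod_{A\in S}u_A^{r_A}$ we obtain
\[
\tilde p_I=\mu_S\,p_I=c_I\Big(\prod_{A\in S}u_A^{\,r_A-|I\cap A|}\Big)\cdot(\text{unit}),
\]
which is regular on $U_S$ because every exponent is $\ge0$.

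It remains to see that the $\tilde p_I$ have no common zero on $U_S$, so that $z\mapsto[\tilde p_I(z)]_I$ is a morphism $U_S\to\mathbb P(\wedge^r\mathfrak t^1_\Delta)$; this is where the adapted basis enters and is, I expect, the crux. If $I_0\subset\Delta$ is the adapted basis attached to $S$, then $I_0\cap A$ is a basis of $\langle A\rangle$ for every $A\in S$, so $|I_0\cap A|=r_A$ and all exponents $r_A-|I_0\cap A|$ vanish; hence $\tilde p_{I_0}=c_{I_0}\cdot(\text{unit})$ is nowhere zero on $U_S$. Thus $[\tilde p_I]_I$ is a well-defined morphism on $U_S$, it agrees with $i$ on the dense open $\mathcal M_\Delta\cap U_S$ where $\mu_S$ is invertible, and it therefore takes values in the closed subvariety $\mathrm{Gr}(r,\mathfrak t^1_\Delta)$. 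Gluing over the cover $(U_S)$ produces $\bar i$. The one technical point to pin down carefully is the monomial factorization of $\alpha$ in the De Concini--Procesi coordinates together with the unit property of the $e_\alpha$ on $U_S$; once this is available, the exponent bookkeeping and the nonvanishing of $\tilde p_{I_0}$ follow at once.
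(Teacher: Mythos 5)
Your argument is correct, but it takes a genuinely different route from the paper's. The paper never touches Pl\"ucker coordinates: it works in the affine chart $V_B$ of the Grassmannian consisting of $r$-planes projecting isomorphically onto $\mathrm{span}\{t_\beta:\beta\in B\}$ for an adapted basis $B$ of $S$, and defines the extension on $U_S$ explicitly as the intersection of the kernels of the functionals $\alpha(z_{A_S(\alpha)})t_\alpha^*-\sum_{\beta\in B}n_{\alpha,\beta}\beta(z_{A_S(\alpha)})t_\beta^*$ for $\alpha\in\Delta\smallsetminus B$, where $z_A$ are the components of $z$ in $\prod_{A\in S}\mathbb P(V/A^\perp)$. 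The only input beyond Theorem \ref{t-dCP} is the observation that $\alpha(z_{A_S(\alpha)})\neq 0$ is exactly one of the defining inequalities of $U_S$, so these equations visibly cut out a point of $V_B$ depending regularly on $z$; agreement with $i$ on $\mathcal M_\Delta\cap U_S$ is a one-line computation on the Gaudin hamiltonians. By contrast, you clear poles in the Pl\"ucker coordinates using the monomial factorization $\alpha=\bigl(\prod_{A\ni\alpha}u_A\bigr)e_\alpha$ in the De Concini--Procesi local coordinates, with the adapted basis supplying the nowhere-vanishing coordinate $\tilde p_{I_0}$. That is a clean and essentially self-contained argument on the Grassmannian side, and the exponent bookkeeping $|I\cap A|\le r_A$ is exactly right; but it leans on a finer piece of the wonderful-model structure (the polydisc coordinates $u_A$ and the unit property of the $e_\alpha$ on $U_S$, which you correctly flag as the point needing care) than the paper's proof does. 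The paper's chartwise construction also has a payoff you would forgo: the explicit chart coordinates $c_{\alpha,\beta}=n_{\alpha,\beta}\beta(z_A)/\alpha(z_A)$ it produces are reused verbatim in the proof of Theorem \ref{t-2} to build the left inverse $\varphi_S$ showing $\bar i$ is a closed embedding in the Coxeter case. Your uniqueness argument (separatedness plus density) matches the paper's.
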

The uniqueness is obvious since ${\mathcal M}_\Delta$ is a Zariski open
subset.  We prove the existence by analyzing the map $i$ on each of
the open sets $U_S$ labeled by the maximal nested sets $S$. We also
prove some further properties of $i|_{U_S}$ that will be useful to
prove that $\bar i$ is an embedding in the case of Coxeter systems.

Let $S$ be a nested set for $\Delta$ and let $B=(\alpha_A)_{A\in S}$
be an adapted basis for $S$.  Let $V_B$ be the open subset of
$\mathrm{Gr}(r,\mathfrak t^1_\Delta)$ consisting of $r$-planes whose
projection onto the span of $t_\beta$ with $\beta\in B$ is
surjective. Such $r$-planes are given by systems of equations of the
form
\begin{equation}\label{e-coordinates}
  t_\alpha^*-\sum_{\beta\in B} c_{\alpha,\beta}t_\beta^*=0,
  \quad \alpha\in\Delta\smallsetminus B.
\end{equation}
where $t_\alpha^*$ is the basis of $(\mathfrak t^1_\Delta)^*$ dual to
$t_\alpha$ and $c_{\alpha,\beta}$ are arbitrary scalar coefficients.
Let $\mathrm{supp}_B(\alpha)\subset B$ be the set of basis elements
occurring with nonzero coefficient in the expression of
$\alpha\in\Delta$ as linear combination of $B$ and
let $V_B^0$ be the subspace of the affine space $V_B$ given by
the equations
\begin{equation}\label{V0}
  c_{\alpha,\beta}=0,\qquad \text{if $\beta\not\in \mathrm{supp}_B(\alpha)$}.
\end{equation}
Let $\alpha=\sum_{\beta\in B}n_{\alpha,\beta}\beta$ be the expression
of $\alpha$ as linear combination of the basis. Then for
$z=(z_A)_{A\in\Delta}$ we set
\[
  i_S(z)=\cap_{\alpha\in \Delta\smallsetminus
  B}\mathrm{Ker}(\alpha(z_{A_S(\alpha)})t_\alpha^*-\sum_{\beta\in
  B}n_{\alpha,\beta}\beta(z_{A_S(\alpha)})t_\beta^*)\subset V.
\]
Here $A_S(\alpha)$ denotes the smallest $A\in S$ containing $\alpha$
(cf.~Lemma \ref{l-AS}).

\begin{lemma} Let $S$ be a maximal nested set for $\Delta$.
\begin{enumerate}
\item[(i)] For all $z\in U_S$, $i_S(z)\in V^0_B$.
\item[(ii)] The map $i_S$ coincides with $i$ on $U_S\cap {\mathcal M}_{\Delta}$.
\end{enumerate}
\end{lemma}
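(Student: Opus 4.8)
The plan is to verify both assertions by direct computation, reading (i) as the statement that the defining functionals of $i_S(z)$ can be normalized to the standard coordinate form \eqref{e-coordinates} with coefficients supported as in \eqref{V0}, and (ii) as the check that the Gaudin hamiltonians lie in the common kernel. For (i) the crucial preliminary observation is that for $z\in U_S$ and $\alpha\in\Delta\smallsetminus B$ one has $\alpha(z_{A_S(\alpha)})\neq 0$. Indeed, $A_S(\alpha)$ is by Lemma \ref{l-AS} the smallest flat of $S$ containing $\alpha$, so $\alpha$ lies in no maximal proper subset $A_i\in S$ of $A_S(\alpha)$; hence $\alpha\in A_S(\alpha)\smallsetminus\cup_i A_i$, and the nonvanishing condition cutting out $U_S$ applied with $A=A_S(\alpha)$ gives $\alpha(z_{A_S(\alpha)})\neq 0$.

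Granting this, I would divide the $\alpha$-th defining functional by the nonzero scalar $\alpha(z_{A_S(\alpha)})$ to bring it to the form $t_\alpha^*-\sum_\beta c_{\alpha,\beta}t_\beta^*$ with $c_{\alpha,\beta}=n_{\alpha,\beta}\beta(z_{A_S(\alpha)})/\alpha(z_{A_S(\alpha)})$. Since the functionals for distinct $\alpha\in\Delta\smallsetminus B$ carry distinct leading terms $t_\alpha^*$ they are independent, so $i_S(z)$ is an $r$-plane lying in $V_B$. The support condition \eqref{V0} is then automatic, as $c_{\alpha,\beta}$ is proportional to $n_{\alpha,\beta}$, which vanishes exactly when $\beta\notin\mathrm{supp}_B(\alpha)$; thus $i_S(z)\in V^0_B$. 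I should also record why each term $\beta(z_{A_S(\alpha)})$ in the sum is meaningful: because $B$ is adapted, $B\cap A_S(\alpha)$ is a basis of $\langle A_S(\alpha)\rangle$ and $\alpha\in\langle A_S(\alpha)\rangle$, so $\mathrm{supp}_B(\alpha)\subseteq\langle A_S(\alpha)\rangle$ and every such $\beta$ descends to a functional on $V/A_S(\alpha)^\perp$.

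For (ii), fix a representative $\tilde z\in V$ of a point $z\in U_S\cap\mathcal M_\Delta$. Here $z_A$ is genuinely the image of $z$ under $V\to V/A^\perp$, so $\gamma(z_A)=\gamma(\tilde z)$ for every $\gamma\in\langle A\rangle$; in particular $\alpha(z_{A_S(\alpha)})=\alpha(\tilde z)$ and $\beta(z_{A_S(\alpha)})=\beta(\tilde z)$ for $\beta\in\mathrm{supp}_B(\alpha)$. Substituting a Gaudin hamiltonian $H(w)=\sum_\gamma\gamma(w)\gamma(\tilde z)^{-1}t_\gamma$ into the $\alpha$-th functional and using $t_\gamma^*(H(w))=\gamma(w)/\gamma(\tilde z)$, the whole expression collapses to $\alpha(w)-\sum_\beta n_{\alpha,\beta}\beta(w)$, which vanishes for all $w$ precisely because $\alpha=\sum_\beta n_{\alpha,\beta}\beta$. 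Hence $i(z)\subseteq i_S(z)$, and since both are $r$-dimensional — the former because $\Delta$ spanning $V^*$ forces $w\mapsto H(w)$ to be injective, the latter by part (i) — equality follows.

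The computations are short; the point requiring care is the combinatorial input for (i), namely the identification of $A_S(\alpha)$ with the minimal flat of $S$ through $\alpha$ and the resulting membership $\alpha\in A_S(\alpha)\smallsetminus\cup_i A_i$, which is exactly what couples the nonvanishing condition defining $U_S$ to the invertibility needed to normalize the equations. The parallel adapted-basis fact $\mathrm{supp}_B(\alpha)\subseteq\langle A_S(\alpha)\rangle$, which keeps the defining functionals well-defined over $U_S$, should be isolated before the main computation.
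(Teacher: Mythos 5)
Your argument is correct and follows essentially the same route as the paper: the key point for (i) is that $\alpha\in A_S(\alpha)\smallsetminus\cup_i A_i$ so the defining condition of $U_S$ gives $\alpha(z_{A_S(\alpha)})\neq 0$, and (ii) is the same direct evaluation of the defining functionals on $H(w)$ followed by a dimension count. Your write-up merely spells out a few details (independence of the functionals, well-definedness of $\beta(z_A)$) that the paper leaves implicit.
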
 
\begin{proof}
  (i) To show that $i_S(U_S)\subset V_B$ we need to show that
  $\alpha(z_{A_S(\alpha)})\neq 0$ for $z\in U_S$. By construction,
  $\alpha\in A_S(\alpha)$ and $\alpha$ is not contained in any other
  $A\in S$ contained in $A_S(\alpha)$. Thus, by definition of $U_S$, $
  \alpha(z_{A_S(\alpha)})\neq 0$. It is clear that the coefficients
  $c_{\alpha,\beta}$ in the image of $i_S$ obey \eqref{V0}.

  (ii) If $(z_A)_{A\in\Delta}\in {\mathcal M}_\Delta$ then $z_A$ is the image of
  $z\in V\smallsetminus\cup_\alpha H_\alpha$ and $i_S(z)$ is an
  $r$-dimensional subspace containing the Gaudin hamiltonians
  $H(w)=\sum_{\gamma\in \Delta}\frac{\gamma(w)}{\gamma(z)}t_\gamma$,
  $w\in V$. Indeed
  \[
  (\alpha(z)t_\alpha^*-\sum_{\beta\in B}n_{\alpha,\beta}\beta(z)
  t_\beta^*)(H(w))=\alpha(w)-\sum_{\beta}n_{\alpha,\beta}\beta(w)=0.
  \]
  Since the Gaudin hamiltonians form an $r$-dimensional vector space,
  their span coincides with the image of $i_S$.
\end{proof}
The maps $i_S$ are thus the restrictions of a map $\bar i\colon \bar
{\mathcal M}_\Delta\to \mathrm{Gr}(r,\mathfrak t^1_\Delta)$ coinciding with $i$
on ${\mathcal M}_\Delta$, proving Proposition \ref{p-0}.
\begin{lemma}\label{l-2} 
  The $r$-planes in the image of $\bar i\colon \bar {\mathcal M}_\Delta\to
  \mathrm{Gr}(r,\mathfrak t^1_\Delta)$ all contain the vector
  $C_\Delta=\sum_{\alpha\in\Delta} t_\alpha$.
\end{lemma}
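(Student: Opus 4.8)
The plan is to verify the claim first on the dense open subset $\mathcal M_\Delta$, where it is essentially immediate, and then to propagate it to the closure $\bar{\mathcal M}_\Delta$ by a closedness argument. For $z\in\mathcal M_\Delta$ every $\alpha\in\Delta$ satisfies $\alpha(z)\neq 0$, so I may take $w=z$ in the definition of the Gaudin hamiltonians and obtain $H(z)=\sum_{\alpha\in\Delta}\frac{\alpha(z)}{\alpha(z)}t_\alpha=\sum_{\alpha\in\Delta}t_\alpha=C_\Delta$. Hence $C_\Delta\in i(z)$ for every $z\in\mathcal M_\Delta$: on the open part the statement holds by inspection.

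To pass to $\bar{\mathcal M}_\Delta$, I would use that the locus $Z=\{L\in\mathrm{Gr}(r,\mathfrak t^1_\Delta):C_\Delta\in L\}$ of $r$-planes through the fixed nonzero vector $C_\Delta$ is a closed subvariety, namely a sub-Grassmannian isomorphic to $\mathrm{Gr}(r-1,\mathfrak t^1_\Delta/\langle C_\Delta\rangle)$. Since $i(\mathcal M_\Delta)\subset Z$ and $\bar i$ is continuous with $\mathcal M_\Delta$ dense in $\bar{\mathcal M}_\Delta$, the whole image $\bar i(\bar{\mathcal M}_\Delta)$ lies in $Z$, which is exactly the assertion.

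Alternatively, and more concretely, I would check the statement chart by chart using the explicit formula for $i_S(z)$. Since the $U_S$ cover $\bar{\mathcal M}_\Delta$ by Theorem \ref{t-dCP}, it suffices to show that $C_\Delta$ lies in the kernel of each functional $\alpha(z_{A_S(\alpha)})t_\alpha^*-\sum_{\beta\in B}n_{\alpha,\beta}\beta(z_{A_S(\alpha)})t_\beta^*$ cutting out $i_S(z)$. Because $t_\gamma^*(C_\Delta)=1$ for every $\gamma\in\Delta$, evaluating this functional on $C_\Delta$ yields $\alpha(z_{A_S(\alpha)})-\sum_{\beta\in B}n_{\alpha,\beta}\beta(z_{A_S(\alpha)})$, which vanishes by the defining identity $\alpha=\sum_{\beta}n_{\alpha,\beta}\beta$ evaluated at $z_{A_S(\alpha)}$. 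This mirrors the computation in the previous lemma, with $C_\Delta$ now playing the role of $H(w)$.

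The only point requiring care, and the main obstacle in the local approach, is to ensure that the evaluations $\beta(z_{A_S(\alpha)})$ are well-defined: the point $z_{A_S(\alpha)}$ lives in $\mathbb P(V/A_S(\alpha)^\perp)$, so only forms lying in $\langle A_S(\alpha)\rangle$ descend to functions there. This is guaranteed by the adapted-basis property that $B\cap A$ is a basis of $\langle A\rangle$ for each $A\in S$: since $\alpha\in\langle A_S(\alpha)\rangle$, its support $\mathrm{supp}_B(\alpha)$ is contained in $B\cap A_S(\alpha)\subset\langle A_S(\alpha)\rangle$, so every $\beta$ with $n_{\alpha,\beta}\neq 0$ does descend and the identity above applies. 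Granting this, the local computation confirms directly that even the boundary $r$-planes contain $C_\Delta$, and the closedness argument of the second paragraph gives the cleanest complete proof.
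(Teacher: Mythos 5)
Your proposal is correct, and it in fact contains the paper's own proof as its second half: the paper proves the lemma by exactly the one-line chart computation you describe, evaluating the defining functional $\alpha(z)t_\alpha^*-\sum_{\beta\in B}n_{\alpha,\beta}\beta(z)t_\beta^*$ on $C_\Delta$ and using $t_\gamma^*(C_\Delta)=1$ together with $\alpha=\sum_\beta n_{\alpha,\beta}\beta$. Your additional care about why $\beta(z_{A_S(\alpha)})$ makes sense (only forms in $\langle A_S(\alpha)\rangle$ descend to $V/A_S(\alpha)^\perp$, and the adapted-basis property guarantees $\mathrm{supp}_B(\alpha)\subset B\cap A_S(\alpha)$) is a point the paper leaves implicit, and it is correctly resolved. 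Your first argument is a genuinely different and equally valid route: the observation that $H(z)=C_\Delta$ shows the claim on the dense open set $\mathcal M_\Delta$, and since the locus of $r$-planes through the fixed vector $C_\Delta$ is a closed sub-Grassmannian, continuity of the morphism $\bar i$ propagates it to the closure. This global argument is cleaner in that it needs no coordinates, but it leans on Proposition \ref{p-0} (the existence of $\bar i$ as a regular map), which the paper establishes precisely through the charts $U_S$ and the maps $i_S$; given that this machinery is already in place, the paper's direct verification on each chart costs essentially nothing extra and yields the same conclusion. Either route is acceptable.
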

\begin{proof}
 \[
 (\alpha(z)t_\alpha^*-\sum_{\beta\in
   B}n_{\alpha,\beta}\beta(z)t_\beta^*)(C_\Delta)=\alpha(z)
 -\sum_{\beta}n_{\alpha,\beta}\beta(z)=0.
 \]
\end{proof}
\begin{remark} 
  The vector $C_\Delta$ spans the center of the holonomy algebra
  $\mathfrak t_\Delta$.
\end{remark}

\section{The case of Coxeter systems}\label{sec-4}
Let us consider the special case of the arrangement of reflection
hyperplanes of an irreducible Coxeter group $G$. In this case we can
describe maximally nested sets combinatorially in terms of subsets of
the set of nodes of the Coxeter diagrams. We start by recalling
some facts about Coxeter systems and explain the properties we
will need. 

\subsection{Coxeter root systems}
We follow \cite{Humphreys}.  Let $E$ be an $r$-dimensional euclidean vector
space and for $\alpha\in E\smallsetminus\{0\}$ denote by $s_\alpha$
the orthogonal reflection with respect to the hyperplane with normal
vector $\alpha$. A (Coxeter) root system is a finite set $\Phi$ of
nonzero vectors in $E$, called roots, such that
\begin{enumerate}
\item For all $\alpha\in\Phi$, $\mathbb R\alpha\cap \Phi=\{\alpha,-\alpha\}$
\item For all $\alpha\in\Phi$, $s_\alpha\Phi=\Phi$.
\end{enumerate}
A root system is called reducible if $E$ is the orthogonal direct sum
of two subspaces of positive dimension whose union contains $\Phi$, otherwise
irreducible.  
The group $G$ generated by the reflections $s_\alpha$, $\alpha\in
\Phi$ is then a finite Coxeter group and any finite Coxeter group is
of this form. Any linear form that does not vanish on roots
decomposes $\Phi$ into positive and negative roots, according to the
sign it takes on roots.  There is then a unique basis of $E$
consisting of roots with the property that all positive roots are linear
combinations of basis vectors
with nonnegative coefficients.  Such a basis is called
basis of simple roots and the corresponding reflections simple
reflections. The Coxeter group acts simply transitively on bases of
simple roots. Angles between simple roots are of the form
$\pi(1-1/m)$ with $m\in\mathbb Z_{\geq 2}$. These data are encoded in the
Coxeter graph, a graph with labeled edges with
simple roots as vertices and an edge with label $m$ connecting
non-orthogonal simple roots making an angle $\pi(1-1/m)$, $(m\geq3)$.
Irreducible root systems have connected Coxeter graphs.
\begin{lemma}\label{l-hr}
  Let $\Phi\subset E$ be an irreducible Coxeter systems and $B$ a
  basis of simple roots. Then there is root $\theta\in \Phi$ which is
  a linear combination of simple roots with positive coefficients.
\end{lemma}
\begin{proof}
  Let us order the simple roots $B=\{\alpha_1,\dots,\alpha_r\}$ so
  that the subgraphs with vertices $\alpha_1,\dots,\alpha_i$ are
  connected for all $i=1,\dots, r$. This is clearly possible since the
  Coxeter graph is connected. This condition means that for each $i>1$
  there is a $j<i$ such that the inner product $(\alpha_i,\alpha_j)$
  is non-zero and in fact (since all angles are obtuse) negative. Let
  $s_i$ denote the corresponding simple reflections. We claim that,
  for all $i$, $v_i:=s_i\cdots s_3s_2\alpha_1$ is a linear combination
  of $\alpha_1,\dots,\alpha_i$ with positive coefficients.  This is
  obvious for $i=1$. Assuming by induction that $v_i$ has the required
  property, we see that in
  \[
  v_{i+1}=s_{i+1}v_i=v_i-2\frac{(\alpha_{i+1},v_i)}
  {(\alpha_{i+1},\alpha_{i+1})}\alpha_{i+1}
  \]
  the inner product $(\alpha_{i+1},v_i)$ is negative, proving the
  induction step. Thus we may take $\theta=s_r\cdots s_2\alpha_1$.
\end{proof}

\subsection{De Concini--Procesi compactification
for Coxeter arrangements}
Let $\Phi\subset E$ be a root system and $G$ the corresponding
group generated by reflections.
Fix a basis $B$ of simple roots and let $\Delta=\Phi_+$ be the corresponding
set of positive roots. We view $\Delta$ as a subset of the dual of
the complexification of $E\simeq E^*$, so that the complexified
reflection hyperplanes $H_\alpha\subset V$ are the kernels of
$\alpha\in\Delta$. Then $G$ acts on the set of hyperplanes and thus
on the set of nested sets. We say that two nested sets are $G$-equivalent
if they are related by an element of $G$.

\begin{prop}\label{p-1}
  Let $\Phi\subset V^*$ be an irreducible root system with
  Coxeter group $G$. Fix a basis $B$ of simple roots.  Then every
  nested set $S$ is $G$-equivalent to one where each subset in $S$ is
  spanned by simple roots. Thus, up to $G$-equivalence, every maximal
  nested set has an adapted basis consisting of simple roots.
\end{prop}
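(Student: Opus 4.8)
The plan is to induct on the rank $r$, normalizing the whole nested set by a single element of $G$ assembled from one normalization at the top followed by independent recursions into orthogonal branches. Throughout I identify $V^*$ with the euclidean space $E$ via the invariant inner product, so that $G$ acts by isometries and $A^\perp=\langle A\rangle^\perp$ for every flat $A$; the statement for a general (not necessarily maximal) nested set then specializes to the adapted-basis assertion at the end.

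First I would isolate the single-flat statement: any flat $A$ is $G$-conjugate to one spanned by simple roots. Choose a point $x$ in the relative interior of $A^\perp$ lying on no reflection hyperplane $H_\beta$ with $\beta\notin\langle A\rangle$; this is possible because the only hyperplanes containing $A^\perp$ are the $H_\gamma$ with $\gamma\in\langle A\rangle\cap\Phi$. Pick $g\in G$ with $gx$ in the closed dominant chamber $\bar C$ and set $I=\{\,i:\alpha_i(gx)=0\,\}$, so that $gx$ lies in the relative interior of the corresponding face. Since the stabilizer of such a point is the standard parabolic subgroup $G_I=\langle s_i:i\in I\rangle$, the roots vanishing at $gx$ are exactly those of $\Phi_I=\Phi\cap\langle I\rangle$. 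Matching the hyperplanes through $gx$ with the $g$-images of the hyperplanes through $x$ yields $g\langle A\rangle=\langle I\rangle$, so $gA$ is spanned by simple roots.

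To handle all of $S$ at once I would first assume $\Delta\in S$ (it is comparable to every flat and $\langle\Delta\rangle=V^*$ is already spanned by $B$, so adjoining it changes nothing). Let $A_1,\dots,A_m$ be the maximal elements of $S\smallsetminus\{\Delta\}$. Being pairwise non-comparable, the nested condition makes $\langle A_1\rangle\oplus\cdots\oplus\langle A_m\rangle$ an \emph{orthogonal} direct sum, so the $A_i$ are precisely the irreducible components of the flat $\hat A=\langle\cup_iA_i\rangle\cap\Delta$, and $\hat A\neq\Delta$ since $\Phi$ is irreducible and the $A_i$ are proper. Applying the single-flat argument to $\hat A$ gives $g\in G$ with $g\langle\hat A\rangle=\langle I\rangle$, $I\subseteq B$; as $g$ is an isometry it carries the orthogonal summands $\langle A_i\rangle$ onto the spans of the connected components $I_1,\dots,I_m$ of the subdiagram on $I$. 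Replacing $S$ by $gS$, each $A_i$ is now spanned by the simple roots in $I_i$, and the $\langle I_i\rangle$ are mutually orthogonal.

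It remains to recurse. For each $i$ the set $S_i=\{A\in S:A\subseteq A_i\}$ is a nested set for the standard parabolic subsystem $\Phi_{I_i}$, which is irreducible and of rank $|I_i|<r$, so by induction some $g_i\in G_{I_i}$ makes every flat of $S_i$ spanned by simple roots. The decisive point is that these normalizations are independent: $G_{I_i}$ fixes $\langle I_i\rangle^\perp$ pointwise, hence fixes each $\langle I_j\rangle$ with $j\neq i$ and keeps $\Delta$ standard, so the $g_i$ act on mutually orthogonal summands, commute, and leave one another's branches untouched. Thus $g_1\cdots g_m$, precomposed with the top normalization $g$, carries $S$ to a nested set all of whose flats are spanned by simple roots, and the adapted-basis statement follows by selecting the vectors $\alpha_A$ from $B$. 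The genuine obstacle throughout is this simultaneity, since a single group element must standardize every flat at once; it is overcome only because the nested condition forces the branches of $S$ to be mutually orthogonal irreducible components, so the parabolic normalizations of distinct branches act on orthogonal summands and cannot interfere.
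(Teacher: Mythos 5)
The paper offers no proof of this proposition: it refers to \cite{dCP}, Sec.~3.1, where the statement is proved for Weyl groups, and asserts that the same argument covers the Coxeter case. Your write-up is therefore a genuine reconstruction, and its architecture is the standard one: normalize a single flat by moving a generic point of $A^\perp$ into the closed fundamental chamber (so that the roots vanishing there are exactly those of a standard parabolic subsystem $\Phi_I$, whence $g\langle A\rangle=\langle I\rangle$), then recurse into the branches of the nested set, using that distinct branches sit in mutually orthogonal parabolic subsystems so their normalizations commute and do not disturb one another. The single-flat step is correct for arbitrary finite Coxeter groups, not just Weyl groups (the stabilizer of a point of $\bar C$ is the standard parabolic generated by the simple reflections fixing it, and its reflections are exactly the $s_\beta$ with $\beta\in\Phi\cap\langle I\rangle$; see \cite{Humphreys}), which is exactly the generality the paper needs.

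The one step you assert rather than prove is the pivotal one: that ``the nested condition makes $\langle A_1\rangle\oplus\cdots\oplus\langle A_m\rangle$ an \emph{orthogonal} direct sum.'' Directness of the sum gives no orthogonality (in $A_2$ the spans of $\{\alpha_1\}$ and $\{\alpha_2\}$ form a direct but non-orthogonal sum), so this cannot follow from the nested condition as literally stated in Section 2.3 of the paper. What is needed is the De Concini--Procesi form of the condition --- the union of a pairwise non-comparable family in $S$ is itself a flat $\hat A$ whose irreducible factors are precisely the $A_i$, i.e.\ $\hat A=A_1\sqcup\cdots\sqcup A_m$ --- combined with a root-system argument: if $\alpha\in A_i$ and $\beta\in A_j$ ($i\neq j$) were not orthogonal, then $s_\alpha\beta=\beta-c\alpha$ with $c\neq 0$ would be a root of $\hat A$ lying in no single $\langle A_k\rangle$, contradicting $\hat A=\bigsqcup_k A_k$. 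This one observation also underwrites your identification of the $g\langle A_i\rangle$ with the spans of the connected components $I_j$ of the subdiagram on $I$ (the $A_i$ are the irreducible factors of $\hat A$, and $g$ sends irreducible factors to irreducible factors). With that justification inserted, your induction closes and the adapted-basis statement follows as you say; this is precisely the second place where the Coxeter structure enters, and it deserves to be made explicit rather than attributed to the combinatorial nested condition alone.
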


This proposition is proved in \cite{dCP}, Sec.~3.1, in the case of
Weyl groups. The same proof applies to the Coxeter case.

\begin{remark} A consequence of Proposition
  \ref{p-1} is that, up to $G$-equivalence, maximal nested sets for
  Coxeter arrangement are obtained from maximal nested sets on Coxeter
  graphs, whose vertices are naturally labeled by simple roots. 
  By definition, a nested set on a graph $\Gamma$ is a set of nonempty
connected subgraphs of $\Gamma$ which are pairwise
either disjoint or contained one into the other. 
Here two subgraphs are called disjoint if there are no edges joining vertices from these two subgraphs.
To a connected
subgraph $I$ of $\Gamma$ one associates the flat generated by simple roots labelling the vertices of $I$. 
\end{remark}

\begin{theorem}\label{t-2}
  Let $(H_\alpha)_{\alpha\in\Delta}$ be the set of reflection
  hyperplanes of a Coxeter system of rank $r$. Then the
  map $\bar i\colon \bar {\mathcal M}_\Delta\to\mathrm{Gr}(r,\mathfrak
  t^1_\Delta)$ is a closed embedding.
\end{theorem}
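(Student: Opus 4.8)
The plan is to use that $\bar{\mathcal M}_\Delta$ is projective, so that $\bar i$ is automatically proper; it then suffices to prove that $\bar i$ is injective on points and unramified (injective on tangent spaces), since a proper injective unramified morphism is a closed immersion. Neither property is formal here: the normalization of a nodal curve is proper, finite, unramified and generically injective, yet not a closed immersion, so both the injectivity and the infinitesimal statement must be established by hand. Throughout I would exploit the $G$-equivariance of the whole construction together with Proposition \ref{p-1} to assume, chart by chart, that the adapted basis $B$ consists of simple roots.

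Unramifiedness, together with injectivity inside a single chart, I would obtain by showing that each $i_S\colon U_S\to V_B^0$ is itself a closed embedding. As established above, the image lies in the affine chart $V^0_B$ with coordinates $c_{\alpha,\beta}=n_{\alpha,\beta}\,\beta(z_{A_S(\alpha)})/\alpha(z_{A_S(\alpha)})$, $\beta\in\mathrm{supp}_B(\alpha)$. The key input is Lemma \ref{l-hr}: each $A\in S$ of rank $\geq 2$ is an irreducible subsystem, so it carries a root $\theta_A$ that is a positive combination of \emph{all} the simple roots in $B\cap A$. Because $\theta_A$ has full support in $B\cap A$ it lies in no proper flat of $S$ below $A$, whence $A_S(\theta_A)=A$ and $\theta_A\in\Delta\smallsetminus B$; moreover $\theta_A\in A\smallsetminus\cup_iA_i$, so $\theta_A(z_A)\neq0$ on $U_S$ by the very definition of $U_S$. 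Consequently the scalars $(c_{\theta_A,\beta}/n_{\theta_A,\beta})_{\beta\in B\cap A}=(\beta(z_A)/\theta_A(z_A))_\beta$ are exactly homogeneous coordinates for $z_A\in\mathbb P(V/A^\perp)$. Reading these off for every $A\in S$ (rank-one flats contribute a single point and carry no data) defines a morphism $\rho$ on a neighbourhood of $i_S(U_S)$ with $\rho\circ i_S$ equal to the closed embedding $U_S\hookrightarrow\prod_{A\in S}\mathbb P(V/A^\perp)$ of Theorem \ref{t-dCP}(ii). A morphism admitting a morphism left inverse is a closed embedding, so each $i_S$ is one; in particular $\bar i$ is unramified everywhere and injective on every chart.

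The \emph{main obstacle} is global injectivity: ruling out two points $z\neq z'$ lying in different charts $U_S$, $U_{S'}$ with $\bar i(z)=\bar i(z')=L$. The plan is to recover from $L$ alone the image $j_\Delta(z)\in\prod_A\mathbb P(V/A^\perp)$, the product now taken over all irreducible flats; since $j_\Delta$ is injective this forces $z=z'$. For a single flat $A$ the recipe mirrors the local one: the relation $\theta_A=\sum_{\beta\in B\cap A}n_{\theta_A,\beta}\beta$ among roots of $A$ produces the element $\eta_{\theta_A}=\theta_A(z_A)\,t_{\theta_A}^*-\sum_\beta n_{\theta_A,\beta}\beta(z_A)\,t_\beta^*$ of $L^\perp$, whose coefficients return the homogeneous coordinates of $z_A$. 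The delicate point, and where the argument must do real work, is that this recovery is clean only while $\theta_A(z_A)\neq0$; at deeper strata $z_A$ may sit on a subsystem hyperplane, $\eta_{\theta_A}$ degenerates, and one must instead control $L^\perp\cap\mathrm{span}\{t_\gamma^*:\gamma\in A\}$ and descend recursively to the active sub-flats of $A$. Verifying that this intersection has the expected dimension $|A|-\mathrm{dim}\langle A\rangle$ at every boundary stratum, so that all the $z_A$ are pinned down and the recursion closes, is the crux.

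Finally I would record that the existence of the positively supported root $\theta_A$ for every irreducible flat, i.e.\ Lemma \ref{l-hr}, a feature special to Coxeter arrangements, is precisely what powers both the local and the global reconstruction. For a general central arrangement no such relation need exist, the point $z_A$ cannot be read off from $L$, and $\bar i$ may fail to be injective; this is the mechanism behind the counterexample from \cite{Leo} mentioned in the introduction.
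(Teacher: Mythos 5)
The part of your argument that you actually carry out --- the chart-by-chart construction of a regular left inverse --- is precisely the paper's proof of Theorem \ref{t-2}. Your $\rho$ is the paper's $\varphi_S$: on $V_B^0\cap G'$ (the paper intersects with the locus $G'$ of $r$-planes containing $C_\Delta$, which is what forces $\theta_A(\tilde z_A)=1$ via \eqref{e-101} and hence $\tilde z_A\neq 0$) one reads off homogeneous coordinates of each $z_A$ from the Grassmannian coordinates $c_{\theta_A,\beta}$ of the full-support root $\theta_A$ supplied by Lemma \ref{l-hr}, and then invokes the closed embedding of Theorem \ref{t-dCP}(ii). Your diagnosis of why this breaks for general arrangements (an adapted basis may admit no vector of $\Delta$ with full support) is also exactly the mechanism behind the counterexample from \cite{Leo}. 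Up to this point the proposal and the paper coincide.

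The difficulty is the part you yourself label the main obstacle. Your treatment of injectivity across charts --- reconstructing $j_\Delta(z)$ from $L$ alone by analyzing $L^\perp\cap\mathrm{span}\{t_\gamma^*:\gamma\in A\}$ for every irreducible flat $A$ and descending recursively at boundary strata --- is announced but not executed: the dimension count you explicitly identify as ``the crux'' is never performed, so the recovery of $z_A$ at deep strata, and with it the injectivity of $\bar i$ on pairs $z\in U_S$, $z'\in U_{S'}$ with $S\neq S'$, is not established. As written the proposal is therefore an incomplete proof. For comparison, the paper does not run your global argument either; it asserts that the existence of the retractions $\varphi_S\colon V_B^0\cap G'\to U_S$ with $\varphi_S\circ\bar i|_{U_S}=\mathrm{id}_{U_S}$ is sufficient. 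For that reduction to be watertight one needs the retractions to be compatible with the cover, e.g.\ that $\varphi_S\circ\bar i=\mathrm{id}$ holds on all of $\bar i^{-1}(V_B^0\cap G')$ rather than only on $U_S$ --- which is a reformulation of exactly the point you are worried about, and which again uses that every $A\in S$ carries a full-support root. So your instinct that something Coxeter-specific must power the global statement is right, but you must either close your recursion or prove this compatibility of the $\varphi_S$ directly; neither is done in the proposal.
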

\begin{proof}
  It is sufficient to consider the case of irreducible Coxeter root
  systems.  By Lemma \ref{l-2} the image of $\bar i$ is contained in
  the subvariety $G'$ of $r$-planes containing $C_\Delta$.  It is
  sufficient to show that for each nested set $S$ we can choose an
  adapted basis $B$ so that $\bar i\colon U_S\to V_B^0\cap G'$ has a
  left inverse $\varphi_S\colon V_B^0\cap G'\to U_S$, i.e., a regular
  map such that $\varphi_S\circ i=\mathrm{id}_{U_S}$. The subset
  $V_B^0$ is the subspace of the affine space $V_B$ consisting of
  $r$-dimensional subspaces determined by equations of the form
  \[
  t_\alpha^*-\sum_{\beta\in\mathrm{supp}_B(\alpha)}c_{\alpha,\beta}t_\beta^*=0,
  \quad \alpha\in \Delta\smallsetminus B.
  \]
  It has coordinates $c_{\alpha,\beta}$,
  $\beta\in\mathrm{supp}_B(\alpha)$ and $V_B^0\cap G'$ is the affine
  subspace of $V^0_B$ given by the equations
  \begin{equation}\label{e-101}
    \sum_{\beta\in\mathrm{supp}_{A_S}(\alpha)} c_{\alpha,\beta}=1,\quad 
    \alpha\in \Delta\smallsetminus B.
  \end{equation}
  Let us view $U_S$ as a subset of $\prod_{A\in S}
  \mathbb{P}(V/A^\perp)$ via Theorem \ref{t-dCP} (ii).  The
  coordinates of $i(z)$ are given in terms of the projections
  $z_A\in\mathbb P(V/A^\perp)$ of $z\in U_S$ by
  \begin{equation}\label{e-102}
  c_{\alpha,\beta}=n_{\alpha,\beta}\frac{\beta(z_A)}{\alpha(z_A)},\quad
  \alpha\in\Delta\smallsetminus B, \quad\beta\in B.
  \end{equation}
  Here $A=A_S(\alpha)$ is the smallest $A\in S$ containing $\alpha$,
  so that, by definition of $U_S$, the denominator does not vanish.
  Pick for every $A\in S$ a root $\theta=\theta_A\in A$ with
  $n_{\theta,\beta}\neq 0$ for all $\beta\in B\cap A$. Such a root
  exists by Lemma \ref{l-hr}.  In particular $A$ is the smallest set
  in $S$ containing $\theta_A$.  For any $W\in V_B\cap G'$ with
  coordinates $(c_{\alpha,\beta})$ define the component $z_A$ of
  $z=\varphi(W)$ as the line through $\tilde z_A\in V/A^\perp$ such
  that
  \[
  \beta(\tilde z_A)=\frac{c_{\theta,\beta}}{n_{\theta,\beta}},\quad
  \theta=\theta_A,\quad \beta\in B\cap A
  \]
  As $B\cap A$ is a basis of $(V/A^\perp)^*$ this uniquely defines a
  vector $\tilde z_A\in V/A^\perp$. By \eqref{e-101} it is normalized
  by $\theta(\tilde z_A)=1$.  In particular $\tilde z_A$ is nonzero
  and defines a line $z_A\in \mathbb P(V/A^\perp)$ obeying
  \eqref{e-102} for $\alpha=\theta$. Repeating this for all $A\in S$
  we obtain a map
  \[
  \varphi_S \colon V_B\to \prod_{A\in S}\mathbb P(V/A^\perp).
  \]
  By construction $\varphi_S\circ i=\mathrm{id}_{U_S}$.
\end{proof} 

\begin{remark}
  One may ask whether the map $\bar i$ is a closed embedding for
  general arrangement $\Delta$. This is not the case. The following
  counterexample is taken from \cite{Leo}, where a necessary and
  sufficient condition for having a closed embedding is formulated.
  Let $V^*$ have a basis $\{\alpha_1,\alpha_2,\alpha_3\}$ and let
  $\Delta=\{\alpha_1,\alpha_2,\alpha_3,
  \alpha_1+\alpha_2,\alpha_1+\alpha_3\}$. The projectivized
  arrangement of hyperplanes is then given by five lines in the
  projective plane forming the sides and a diagonal of a
  quadrilateral.  There are exactly two points $P,Q$ at which three
  lines meet. The De Concini--Procesi compactification $\bar {\mathcal
    M}_\Delta$ is the blowup $\widehat{\mathbb P^2}$ of $\mathbb P^2$
  at these two points. It surjects under $\bar i$ to the closure
  $\overline{i({\mathcal M}_\Delta)}$ in the Grassmannian of the span
  of Gaudin hamiltonians, which is isomorphic to $\mathbb P^1\times
  \mathbb P^1$: $(\lambda_1{:}\lambda_2,\mu_1{:}\mu_2)\in \mathbb
  P^1\times \mathbb P^1$ corresponds to the Gaudin subalgebra
  $\mathrm{span}(C_\Delta,\lambda_1t_{\alpha_2}+
  \lambda_2t_{\alpha_1+\alpha_2},\mu_1t_{\alpha_3}+\mu_2t_{\alpha_1+\alpha_3})$.
  We have the diagram of birational morphisms
  \[
  \mathbb P^2\leftarrow \bar {\mathcal M}_\Delta \cong \widehat{\mathbb P^2}
  \stackrel{\bar i}\rightarrow\overline{i({\mathcal M}_\Delta)}\cong \mathbb
  P^1\times\mathbb P^1
  \]
  The map $\bar i$ is the blowdown of the proper transform of the line
  through $P$ and $Q$. It maps a curve to a point and is therefore not
  injective.  \dontprint { (*** Internal comment:
    What goes wrong in the proof in this case is that an adapted basis
    of a nested set containing $\{\alpha_1\}$ must contain
    $\alpha_1$. For such a basis $B$ there is no vector in $\Delta$
    with support $B$. ***) }
\end{remark}

\section{Real version and graph-associahedra}

All the considerations above work over reals as well. 
The corresponding real wonderful models $\bar M_\Delta(\mathbb R)$ were studied in the $A_{n-1}$ case by Kapranov \cite{Kap} and in the general Coxeter case by De Concini and Procesi \cite{dCP} (see also Gaiffi \cite{Gaiffi}). 

They can be described as an iterated real blow-up of the projective space and can be glued from $|G|/2$ copies of the following generalisations of Stasheff polytope, or associahedron.
The corresponding convex polytopes are known as {\it graph-associahedra} \cite{CD}, or {\it De Concini--Procesi associahedra} \cite{TL}.
They have the following explicit description \cite{Devadoss1, TL} inspired by \cite{Ma,St}. 

Let $\Gamma$ be a connected graph with set of vertices $B$ and consider the set $S_{\Gamma}$ of all connected subgraphs $I \subset \Gamma$ excluding $\Gamma$.
The corresponding graph-associahedra can be defined as the following convex polytope in $\mathbb R^{|B|}$ with coordinates labeled by the vertices  of $\Gamma$
\[
P_{\Gamma}=\{ x \in \mathbb R^{|B|}: \,\, \sum_{\alpha \in B} x_{\alpha}=3^{|B|}, \,\,  \sum_{\alpha \in I} x_{\alpha}\geq 3^{|I|},  \,\, I \in S_{\Gamma}\}.
\]
It is known \cite{CD, TL} that $P_{\Gamma}$ is a simple, convex polytope, whose face poset is isomorphic to the poset of the nested sets on $\Gamma$ (see the definition in remark 4.3).
In particular, the vertices of $\Gamma$ corresponds to the maximal nested sets. 
One can also describe it as the iterated truncation of a simplex \cite{CD}.

In the case when $\Gamma$ is a path with $r$ vertices this gives a particular realisation of the Stasheff polytope $K_{r+1}$, 
which was initially introduced only combinatorially \cite{Stas}.

\begin{figure}[h]
\centerline{ \includegraphics[width=4.5cm]{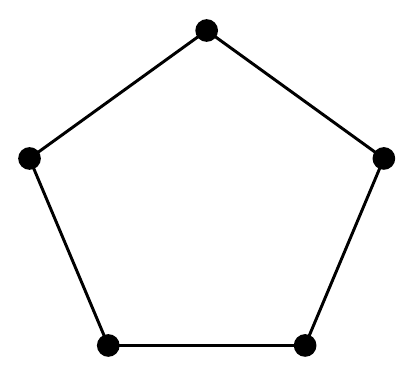}  \hspace{5pt}  \includegraphics[width=4.5cm]{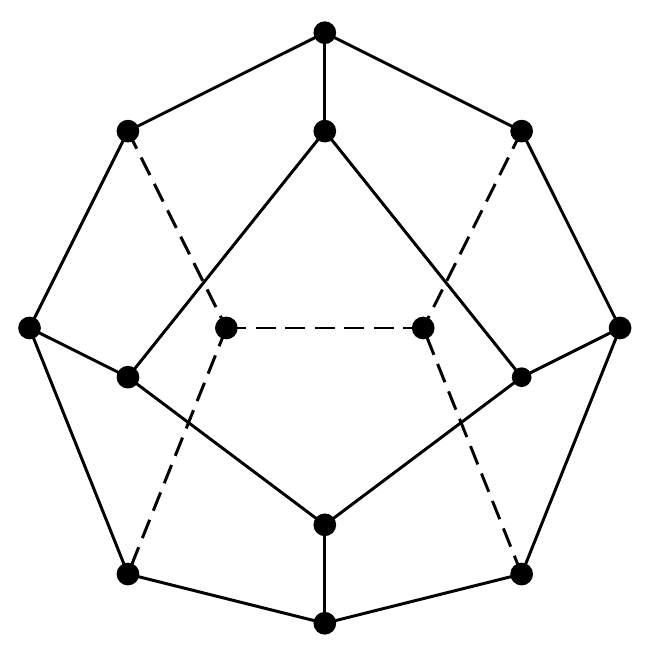}}
\caption{Stasheff polytopes $K_4$ and $K_5$.} \label{etas}
\end{figure}

When $\Gamma=\Gamma_G$ is a Coxeter graph we have the De Concini--Procesi associahedron $P_G$.

Note that the labelling of the angles on the Coxeter graph does not affect the definition of the polytope, so the polytopes $P_G$
for the arrangements of types $A_n$ and $B_n=C_n$ are the same Stasheff polytopes $K_{n+1}$. 

The space $\bar M_\Delta(\mathbb R)$ is tessellated by $|G|/2$ copies of $P_G$, 
which is isomorphic to the closure of the real points corresponding to one Coxeter chamber \cite{dCP}.
The details of the gluing can be found in \cite{CD}. In particular, in the $A_3$ and $B_3$ cases we have two different non-oriented surfaces glued from 12 and 24 pentagons, which topologically are the connected sums of 5 and 8 copies of the real projective plane, respectively.

As it follows from our results modulo this gluing one can use $P_G$ as a geometric chart of all principal Gaudin subalgebras in the real case (see the discussion of the $A_n$ case in \cite{SV}).

\section{Non-principal Gaudin subalgebras}

It is natural to ask if there are Gaudin subalgebras outside the principal family.
We have shown in \cite{AFV} that in the $A_n$ case the answer is negative.
However, already in the $B_n$ case not all Gaudin subalgebras are principal.
 
Let 
$$r_i, t_{ij}=t_{ji}, s_{ij}=s_{ji}, 1\leq i \le j \leq n$$ be the generators of the $B_n$ holonomy Lie algebra, corresponding to the hyperplanes $z_i=0, \, z_i-z_j=0, \, z_i+z_j=0$, respectively.
The defining relations have the form for pairwise distinct $i,j,k,l$ 
$$[r_i, r_j+t_{ij}+s_{ij}]=0,$$
$$[t_{ij}, r_i+r_j+s_{ij}]=0,$$
$$[s_{ij}, r_i+r_j+t_{ij}]=0$$
for $B_2$ subsystems,
$$[t_{ij}, t_{ik}+t_{jk}]=0,$$
$$[t_{ij}, s_{ik}+s_{jk}]=0,$$
$$[s_{ij}, t_{ik}+s_{jk}]=0$$
for $A_2$ subsystems, and
$$[r_i, t_{jk}]=[r_i, s_{jk}]=0,$$
$$[t_{ij}, t_{kl}]=[t_{ij}, s_{kl}]=[s_{ij},s_{kl}]=0$$
for $A_1\times A_1$ subsystems.

The principal family of Gaudin subalgebras is the closure of the family
\begin{equation}
\label{B}
G(z_1,\dots, z_n)=\{\sum_{i=1}^n\frac{a_i}{z_i}r_i+\sum_{i<j}^n\frac{a_i-a_j}{z_i-z_j}t_{ij}
+\sum_{i<j}^n\frac{a_i+a_j}{z_i+z_j}s_{ij},\, a \in \mathbb C^n\},
\end{equation}
but it is easy to see that there is another family of Gaudin subalgebras

\begin{equation}
\label{A}
G^{A}(z_1,\dots, z_n)=\{\sum_{i=1}^n\frac{a_i}{z_i}r_i+\sum_{i<j}^n\frac{a_i-a_j}{z_i-z_j}(t_{ij}+s_{ij}), \, a \in \mathbb C^n\}.
\end{equation}
The closure of the second family is isomorphic to the $A_{n}$-type wonderful model, which is the Deligne--Mumford--Knudsen moduli space 
$\bar M_{0,n+2}.$ 

So the variety of $B_n$-Gaudin subalgebras contains at least two $(n-1)$-dimensional subvarieties. For $n\geq 3$ they are irreducible components but they are not the only ones, as explained in \cite{Leo}. We close by examining the cases $n=2$ and $n=3$. Details can be found in \cite{Leoth}.

\emph{The $B_2$-Case: } The variety of Gaudin subalgebras is a non-singular irreducible subvariety of $\mathrm{Gr}(2,\mathfrak t^1_\Delta)$ isomorphic to $\mathbb{P}^2$. Indeed, in that case any 2-dimensional subspace of $\mathfrak t^1$ containing the central element
$c=r_1+r_2+t_{12}+s_{12}$ is a Gaudin subalgebra. The closures of the families (\ref{B}) and (\ref{A}) are $\mathbb{P}^1$-subvarieties of it. One can check that the first one is a conic, the second one is a line, intersecting in two points, which are $\mathrm{span}(c, r_1)$ and $\mathrm{span}(c, r_2).$

\emph{The $B_3$-Case: } The variety of Gaudin subalgebras is comprised of eight non-singular irreducible components of different dimensions. Among them are the two-dimensional closures of the families (\ref{B}) and (\ref{A}), which we denote by $B_{\{123\}}$ and $A_{\{123\}}$ respectively.
They are (a special case of) weak del Pezzo surface of degree 2 and the degree 5 del Pezzo surface respectively.

Three additional two-dimensional irreducible components correspond to the $B_2$-subsystems and each is isomorphic to $\mathbb{P}^2$. We call them $B_{\{12\}}, B_{\{13\}}$ and $B_{\{23\}}$. The Gaudin subalgebra corresponding to a point $(x_i : x_j : x_{ij})\in\mathbb{P}^2\cong B_{\{ij\}}$, for $1\leq i<j\leq 3$, is of the form
$$\mathrm{span}(c_\Delta, c_{ij}, x_i r_i+x_j r_j +x_{ij} t_{ij})\ ,$$
where $c_\Delta$ is the central element
$$
c_\Delta=\sum_{i=1}^3r_i+\sum_{i<j}^3(t_{ij}+s_{ij})
$$
 and $$c_{ij}=r_i+r_j+t_{ij}+s_{ij}.$$
 
 Finally there are three $\mathbb{P}^1$-components which we call $A_{\{12\}}, A_{\{13\}}$ and $A_{\{23\}}$. The Gaudin subalgebra corresponding to a point $(y_i : y_j)\in\mathbb{P}^1\cong A_{\{ij\}}$ is of the form
$$\mathrm{span}(c_\Delta, r_1+r_2+r_3 + t_{ij}+s_{ij}, y_i t_{ij}+y_j s_{ij}).$$

The irreducible components of the Gaudin variety intersect as follows: $B_{\{123\}}$ and $A_{\{123\}}$ each intersect $B_{\{ij\}}$ in a respective $\mathbb{P}^1$-curve. Those two curves intersect at two points 
$$\mathrm{span}(c_\Delta, c_{ij}, r_i),\,\mathrm{span}(c_\Delta, c_{ij}, r_j),$$
giving altogether six points, which are the only points of intersection of the components $B_{\{123\}}$ and $A_{\{123\}}$. Each of the curves $A_{\{ij\}}$ does not meet the component $B_{\{123\}}$ and intersects $A_{\{123\}}$ at a single point
$$\mathrm{span}(c_\Delta, r_1+r_2+r_3, t_{ij}+s_{ij}).$$

Notice that for the $B_3$-case, the variety of Gaudin subalgebras contains as components the Gaudin varieties of types $A_l$ and $B_l$ for any $l<3$. This result can be extended to the $B_r$-case where, apart from an irreducible $B_r$- and $A_r$-component, all Gaudin varieties of types $A_l$ and $B_l$, $l<r$ are contained (in multiple copies). The intersection structure of these subvarieties is quite intricate, but its combinatorics could in principle be deduced from the intersection lattice in a recursive manner. It is not completely clear whether one is able to capture all Gaudin subalgebras in this recursive way for $r>3$.

It is however remarkable that for $B_n$ there are no abelian subalgebras contained in $\mathfrak t^1_\Delta$ which have dimensions larger than $n$, i.e. Gaudin subalgebras are maximal exactly as in the $A_n$-case. This fact is owed to the arrangement being fibre-type. It is known that among the finite Coxeter arrangements only the $A_n$, $B_n$ and dihedral cases are of fibre-type and indeed in every other case there can be found abelian subalgebras of higher dimension than the rank of the arrangement (see \cite{Leoth}).




\section{Acknowledgements.}

We are grateful to I. Cherednik, M. Kapranov and T. Kohno for stimulating discussions.

The work of APV was partly supported by the EPSRC (grant EP/J00488X/1). The work of GF was
partly supported by the Swiss National Science Foundation (National Centre of Competence in Research ``The Mathematics of Physics -- SwissMAP'').

\end{document}